\newtheorem{thm}{Theorem}[section]
\newtheorem{cor}[thm]{Corollary}
\newtheorem{lem}[thm]{Lemma}
\newtheorem*{prob*}{Problem}
\newtheorem*{thm*}{Theorem}
\theoremstyle{definition}
\newtheorem*{defn*}{Definition}
\newtheorem*{rem*}{Remark}
\numberwithin{equation}{section}
\newcommand{\C}{\mathbb C}
\newcommand{\R}{\mathbb R}
\DeclareMathOperator{\const}{const}
\DeclareMathOperator{\supp}{supp}
\newcommand{\im}{\mathop{\mathrm{Im}}}
\begin{document}
\title[Universality theorem for  characteristic polynomials]
 {\bf{A universality theorem for ratios of random characteristic polynomials}}

\author{Jonathan Breuer and Eugene Strahov}

\thanks{Einstein Institute of Mathematics, The Hebrew University of
Jerusalem, Givat Ram, Jerusalem 91904. E-mail: jbreuer@math.huji.ac.il,
strahov@math.huji.ac.il. The first author (J.~B.) is supported in part by the US-Israel Binational Science Foundation (BSF) Grant No.\ 2010348 and by the Israel Science foundation (ISF) Grant No.\ 1105/10. The second author (E.~S.) is supported in part by the US-Israel Binational Science
Foundation (BSF) Grant No.\ 2006333,
 and by the Israel Science Foundation (ISF) Grant No.\ 1441/08.\\ }

\keywords{Orthogonal polynomial ensembles, random characteristic polynomials, universality limits, Lubinsky's universality theorem}

\commby{}
\begin{abstract}
We consider asymptotics of ratios of random characteristic polynomials associated
with orthogonal polynomial ensembles. Under some natural conditions on the measure in the definition of the orthogonal polynomial ensemble
we  establish a universality limit for these ratios. 
\end{abstract}
\maketitle
\section{Introduction}
\subsection{Formulation of the problem}\label{OrthogonalPolynomialEnsembleSection}

In this article we consider orthogonal polynomial ensembles of $n$ particles on $\R$.
Such ensembles are described by a positive Borel measure $\mu$ with finite moments, and the associated
 distribution function for the particles $\{x_1,\ldots,x_n\}$ of the form
$$
d\mathbb{P}_{\mu,n}(x)=\frac{1}{Z_n}\triangle_n(x)^2\prod\limits_{i=1}^nd\mu(x_i).
$$
Here $Z_n$ is the normalization constant,
$$
Z_n=\int\ldots\int\triangle_n(x)^2\prod\limits_{i=1}^nd\mu(x_i),
$$
and
$$
\triangle_n(x)=\prod\limits_{n\geq i>j\geq 1}(x_i-x_j)
$$
is the Vandermonde determinant. For symmetric functions $f(x)=f(x_1,\ldots,x_n)$ of the $x_i's$,
$$
\left\langle f(x)\right\rangle_{\mu}\equiv \frac{1}{Z_n}\int\ldots\int f(x)\triangle_n(x)^2\prod\limits_{i=1}^nd\mu(x_i)
$$
denotes the average of $f$ with respect to $d\mathbb{P}_{\mu,n}(x)$.

Given a set of $n$ points, $\{x_1,\ldots,x_n\}$, let
$$
D_n^{\{x_1, \ldots,x_n\}}(\alpha)=\prod\limits_{i=1}^n(\alpha-x_i),
$$
considered as a polynomial in $\alpha \in \mathbb{C}$. When the $x_i$'s are chosen to be the \emph{random} particle locations for an orthogonal polynomial ensemble, this becomes a random polynomial which we denote simply by $D_n(\alpha)$. By extension from the random matrix case (see below), $D_n(\alpha)$ is known as the \emph{characteristic polynomial} for the orthogonal polynomial ensemble.  


The goal of the present paper is to study the large $n$ asymptotics of the averages
\begin{equation}\label{averages}
\left\langle\frac{D_n(\alpha_1)\ldots D_n(\alpha_k)}{D_n(\beta_1)\ldots D_n(\beta_k)}\right\rangle_{\mu}.
\end{equation}
In particular, we focus on the bulk of the support for measures $\mu$, and aim to establish universality of the scaling limit of averages (\ref{averages}) under mild assumptions on $\mu$. Our main result says roughly that for $\mu$ locally absolutely continuous with a bounded Radon-Nikodym derivative, universality for the reproducing kernel implies universality for \eqref{averages}. It should be emphasized that aside for the existence of moments, we make no global assumptions on $\mu$. In particular, we do not assume that $\mu$ is globally absolutely continuous or that $\supp(\mu)$ is compact.   


\subsection{Motivation and remarks on related works}
In the case when $d\mu(x)=e^{-V(x)}dx$ one can interpret the particles in the definition of the orthogonal polynomial ensembles as eigenvalues of a random Hermitian matrix
taken from a Unitary Ensemble of Random Matrix Theory (RMT). The most well studied case is that of $V(x)=x^2$, called the Gaussian Unitary Ensemble. In this context random characteristic polynomials are indeed the characteristic polynomials of a random Hermitian matrix. 

Averages of characteristic polynomials of random matrices are basic objects of interest in RMT, and were first considered by Andreev and Simons \cite{andreev}, Brezin and Hikami \cite{BrezinHikami} and Keating and Snaith \cite{KeatingSnaith}.  In particular, such averages are used to make predictions about zeros of the Riemann-zeta function on the critical line
(see, for example, Keating and Snaith \cite{KeatingSnaith},  Conrey, Farmer,  Keating,  Rubinstein and  Snaith \cite{Conrey}, the  survey article by Keating and Snaith \cite{KeatingSnaith1}, and references therein). Averages \eqref{averages} are related to certain important distribution functions studied in physics of quantum chaotic systems. Two examples involve the curvature distribution of energy levels of a chaotic system and the statistics of the local Green functions, in particular, the joint distribution of local
denity of states; see Andreev and Simons \cite{andreev} and references therein.  Many other uses are described, for example, in Brezin \cite{BrezinSurvey}.

For averages \eqref{averages} a number of algebraic and asymptotic results is available in the literature. Papers by Baik, Deift, and Strahov \cite{BaikDeiftStrahov}, Fyodorov and Strahov \cite{FyodorovStrahov} and 
Borodin and Strahov \cite{BorodinStrahov} give explicit determinantal representations for \eqref{averages}.
These representations can be used for the asymptotic analysis as $n\rightarrow\infty$. In particular,
the asymptotics of \eqref{averages} were investigated in Strahov and Fyodorov \cite{FyodorovStrahov} in the case when $d\mu(x)=e^{-V(x)}dx$, and $V(x)$ is an even polynomial. Strahov and Fyodorov \cite{FyodorovStrahov}
deal with the asymptotic in the bulk of the spectrum.  Vanlessen \cite{Vanlessen} shows that  for certain class of unitary ensembles of
Hermitian matrices, averages \eqref{averages} have universal asymptotic  behavior at the origin of the spectrum. The asymptotic analysis in \cite{FyodorovStrahov}, and in \cite{Vanlessen}
is based on the reformulation of an orthogonal polynomial problem as a Riemann-Hilbert problem by Fokas, Its and Kitaev \cite{fokas}. The Riemann-Hilbert problem is then analyzed asymptotically using the noncommutative steepest-descent
method introduced by Deift and Zhou, see Deift \cite{deift} and references therein.

In recent years, it has become evident that orthogonal polynomial ensembles play a role in probabilistic models other than RMT as well (see, for example, the survey paper by  K\"onig \cite{Konig}). As the relevant measures in these models are not necessarily of the form $e^{-V(x)}dx$, it is of interest to study the problem of universality limits for basic quantities of interest for a broader class of measures. Lubinsky's universality theorems regarding bulk universality for the reproducing kernel (see Lubinsky \cite[Theorem 1.1]{Lubinsky1}, and \cite[Theorem 1.1]{Lubinsky3}, and also Findley \cite{findley}, Simon \cite{Simon-ext}, Totik \cite{totik} and Avila-Last-Simon \cite{als} for extensions of Lubinsky's results and methods) are important steps in this direction. Our goal in this paper is to establish the corresponding universality limits for averages \eqref{averages} in the bulk of the support of $\mu$.



\subsection{Description of the main result}
For $n=0,1,2,\ldots $ we introduce the orthonormal polynomials associated with $\mu$,
$$
p_n(x)=\gamma_nx^n+\ldots ,\;\;\gamma_n>0.
$$
The orthonormality conditions are
$$
\int p_j(x)p_k(x)d\mu(x)=\delta_{jk}.
$$

The $n$th reproducing kernel (also known as the Christoffel-Darboux kernel) for $\mu$ is
$$
K_n(x,y)=\sum\limits_{k=0}^{n-1}p_k(x)p_k(y).
$$
If $d \mu(t)=w(t)dt$ in a neighborhood of $x$, we also define the normalized kernel to be
$$
\widetilde{K}_n(x,y)=w(x) K_n(x,y).
$$

The Christoffel-Darboux formula enables one to rewrite $K_n(x,y)$ as
\begin{equation}\label{ChristoffelDarboux}
K_n(x,y)=\frac{\gamma_{n-1}}{\gamma_{n}}\frac{p_n(x)p_{n-1}(y)-p_{n-1}(x)p_n(y)}{x-y}.
\end{equation}
The reproducing kernel plays a special role in the theory of orthogonal polynomial ensembles. This is because
the correlation functions for the orthogonal polynomial ensemble can be expressed as determinants
of a matrix whose entries are given by values of the reproducing kernel (see, for example, Deift \cite{deift}). 

We say $K_n$ has a universal limit at $x$, if for any $a,b \in \mathbb{C}$ we have
\begin{equation} \label{universal}
\underset{n\rightarrow\infty}{\lim}\frac{K_n(x+\frac{a}{\widetilde{K}_n(x,x)},x+\frac{b}{\widetilde{K}_n(x,x)})}{K_n(x,x)}=\frac{\sin\pi(a-b)}{\pi(a-b)}=\mathbb{S}(a,b).
\end{equation}
We say $K_n$ has a \emph{uniform universal limit} at $x$, if the limit in \eqref{universal} is uniform for $a, b$ in compact subsets of $\mathbb{C}$.
 
Two approaches introduced recently by Lubinsky (\cite{Lubinsky1}-\cite{Lubinsky3}) make it possible to establish universality for $K_n$ under relatively mild conditions on $\mu$. These approaches were further extended and generalized by Findley \cite{findley}, Simon \cite{Simon-ext}, Totik \cite{totik} and Avila-Last-Simon \cite{als}. A typical result is: 

\begin{thm}\label{LubinskyTheorem}
Let $\mu$ be a probability measure on $\mathbb{R}$ with compact support that is regular in the sense of Stahl and Totik \cite{stahl-totik}. Suppose $x\in \supp(\mu)$ has a neighborhood, $J$, such that $\mu$ is absolutely continuous in $J$: $d\mu (t)=w(t)dt$ for $t \in J$. Assume further, that $w$ is positive and continuous at $x$. Then uniformly for $a,b$ in compact subsets of the complex plane, we have
\begin{equation}
\underset{n\rightarrow\infty}{\lim}\frac{K_n(x+\frac{a}{\widetilde{K}_n(x,x)},x+\frac{b}{\widetilde{K}_n(x,x)})}{K_n(x,x)}=\mathbb{S}(a,b).
\end{equation}
In other words, $K_n$ has a uniform universal limit at $x$.
\end{thm}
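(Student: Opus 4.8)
I would follow Lubinsky's ``de Branges space'' method for bulk universality, which needs no Riemann--Hilbert input. \emph{Step 1 (rescaling).} Write $\la_n(y)=1/K_n(y,y)$ for the Christoffel function. Regularity in the sense of Stahl--Totik, together with the local absolute continuity and the positivity and continuity of $w$ at $x$, implies by the M\'at\'e--Nevai--Totik theorem and its extension due to Totik that $n\la_n(x)\to w(x)/\om(x)$, where $\om$ is the density of the equilibrium measure of $\supp(\mu)$; in particular $\eps_n:=1/\widetilde K_n(x,x)$ satisfies $\eps_n\asymp 1/n\to 0$ and the identity $w(x)\,\eps_n\,K_n(x,x)=1$ holds for every $n$. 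Setting $\Phi_n(a,b):=K_n(x+a\eps_n,x+b\eps_n)/K_n(x,x)$, the statement to be proved is precisely that $\Phi_n(a,b)\to\mathbb S(a,b)$, uniformly on compact subsets of $\C^2$.

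\emph{Step 2 (normal family).} From the reproducing-kernel Cauchy--Schwarz bound $|K_n(z,w)|^2\le K_n(z,\bar z)\,K_n(w,\bar w)$, the local two-sided estimate $0<c\le w\le C<\infty$ near $x$ (a consequence of continuity and positivity of $w$ at $x$), and a Bernstein/sub-mean-value inequality at scale $\eps_n$, one shows that the $\Phi_n$ are locally uniformly bounded on $\C^2$ and that in each variable they are polynomials whose growth forces every subsequential limit $\Phi$ to be entire of exponential type $\le\pi$ in each variable, with $\Phi(a,b)=\Phi(b,a)=\overline{\Phi(\bar a,\bar b)}$, $\Phi(a,a)\ge 0$ for real $a$, and $\Phi(0,0)=1$. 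By Montel's theorem it then suffices to show that every such $\Phi$ coincides with $\mathbb S$.

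\emph{Step 3 (the reproducing identity in the limit).} This is the heart of the matter. Beginning with $\int_\R K_n(x+a\eps_n,s)\,K_n(s,x+b\eps_n)\,d\mu(s)=K_n(x+a\eps_n,x+b\eps_n)$, split the integral at $|s-x|=\de$. On $|s-x|\le\de$ substitute $s=x+t\eps_n$, use $d\mu(s)=w(s)\,ds$, $w(s)\to w(x)$ and the identity $w(x)\eps_n K_n(x,x)=1$ to pass to the limit; the contribution of $|s-x|>\de$, divided by $K_n(x,x)^2$, must be shown to be negligible. The outcome is $\int_\R\Phi(a,t)\Phi(t,b)\,dt=\Phi(a,b)$, so $\Phi$ is the reproducing kernel of a de Branges space $\mathcal H$ isometrically contained in $L^2(\R)$ and consisting of entire functions of exponential type $\le\pi$. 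The tail estimate is the step I expect to be hardest: it requires $n$-uniform control of $K_n(y,y)$ for $y$ near $x$ and an $L^2$-localization bound that must not invoke any global regularity of $\mu$ beyond that of Stahl--Totik; the standard tool is the comparison of Christoffel functions $\la_n^\mu(y)\le\la_n^{\nu}(y)$, with $\nu$ the restriction of $\mu$ to $[x-\de,x+\de]$, combined with the lower bound $w\ge c$ there.

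\emph{Step 4 (rigidity and conclusion).} It remains to identify $\mathcal H$. Restricting the reproducing identity gives, for real $b$, $\int_{-R}^{R}|\Phi_n(a,b)|^2\,da\le\Phi_n(b,b)+o(1)$, hence in the limit $\int_\R|\Phi(a,b)|^2\,da\le\Phi(b,b)$, while Step 3 yields the reverse inequality; this pins the exponential type of $\Phi$ to be exactly $\pi$ and makes the inclusion $\mathcal H\subseteq L^2(\R)$ isometric. By de Branges' uniqueness theory for such spaces (equivalently, by a zero-counting argument for the entire function $a\mapsto\Phi(a,a)$, which is nonnegative on $\R$, equals $1$ at $0$, has exponential type $\le 2\pi$, and satisfies $\int_\R|\Phi(a,b)|^2\,da=\Phi(b,b)$), the only possibility is that $\mathcal H$ is the Paley--Wiener space of type $\pi$, whose reproducing kernel is exactly $\mathbb S(a,b)=\sin\pi(a-b)/\pi(a-b)$. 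Since every subsequential limit equals $\mathbb S$, the full sequence $\Phi_n$ converges to $\mathbb S$ locally uniformly, i.e.\ $K_n$ has a uniform universal limit at $x$. One could instead run Lubinsky's earlier comparison argument, sandwiching $d\mu$ near $x$ between $(w(x)\pm\eta)\,dt$ and reducing to a locally constant weight, but that route still needs a model computation, so the above is more self-contained.
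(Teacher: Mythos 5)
There is nothing in the paper to compare your argument against: Theorem \ref{LubinskyTheorem} is not proved in this paper at all. It is imported from the literature --- remark (b) following the statement says it follows from \cite[Theorem 1.1]{Lubinsky3} together with remark (e) there, and was obtained earlier by Simon \cite{Simon-ext} and Totik \cite{totik}. So the only meaningful comparison is with those cited sources, and judged against them your outline is essentially a faithful reconstruction of Lubinsky's second (``de Branges space'') approach as implemented in \cite{Lubinsky3} and \cite{als}: Christoffel-function asymptotics via M\'at\'e--Nevai--Totik/Totik to fix the scale, a normal-families argument producing subsequential limits of exponential type, the limiting reproducing identity $\int\Phi(a,t)\Phi(t,b)\,dt=\Phi(a,b)$, and de Branges' ordering theorem (plus the normalization $\Phi(0,0)=1$) to identify the limit with the Paley--Wiener kernel $\mathbb{S}$. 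That is the correct architecture, and you correctly flag the genuinely hard points.

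As a self-contained proof, however, what you have is a roadmap rather than an argument: the two steps you yourself identify as delicate are exactly where the content of \cite{Lubinsky3} lives and are left as ``one shows.'' In Step 2, local uniform boundedness of $\Phi_n$ on compacts of $\C^2$ with the \emph{correct} exponential-type bound does not follow from Cauchy--Schwarz plus the real-axis Christoffel estimate alone; one needs a quantitative off-axis growth lemma (Bernstein--Walsh or Lubinsky's sub-mean-value inequality at scale $\eps_n$), and without the sharp constant the later rigidity step degenerates. In Step 3, the tail estimate requires the Christoffel-function comparison and a localization of $\int|K_n(y,s)|^2 d\mu(s)$ that uses Stahl--Totik regularity in an essential way; this is a lemma, not a remark. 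Given that the paper itself treats the theorem as a quoted black box, citing \cite{Lubinsky1}--\cite{Lubinsky3}, \cite{Simon-ext}, \cite{totik} is the appropriate resolution; if you intend your sketch as an actual proof, Steps 2 and 3 must be filled in.
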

{\bf Remarks.} \\ 
(a) Theorem \ref{LubinskyTheorem} implies universality in the bulk for the $m$-point correlation function for the orthogonal polynomial ensemble (see Lubinsky \cite[Section 1]{Lubinsky2}). \\
(b) The Theorem follows, for example, from \cite[Theorem 1.1]{Lubinsky3} combined with remark (e) there. The result was obtained previously, however, by both Simon \cite{Simon-ext} and Totik \cite{totik}, using a modification of Lubinsky's approach from \cite{Lubinsky1}. \\
(c) The requirement of continuity can be relaxed to a Lebesgue  point type condition, assuming boundedness and uniform positivity of $w$ in a neighborhood of $x$. This is Theorem 1.2 of \cite{Lubinsky3}. \\
(d) Avila, Last and Simon \cite{als}, using a modification of the approach of \cite{Lubinsky3}, obtain universality for certain measures whose support is a positive Lebesgue measure Cantor set. \\
(e) While it seems that local absolute continuity of $\mu$ is almost sufficient for universality at $x$, it is certainly not necessary: Breuer \cite{breuer} has recently shown there are purely singular measures such that universality holds uniformly for $x$ in an interval (and $a,b \in \mathbb{R}$). Of course, $\widetilde{K}_n(x,x)$ cannot be defined for purely singular measures, so the statement needs to be slightly modified (see \cite{breuer} for details).
\smallskip

In this paper we prove an analogue of Theorem \ref{LubinskyTheorem} for averages \eqref{averages}. Here is our main result.
\begin{thm} \label{MainTheorem}
Let $\mu$ be a probabilty measure on $\R$ with finite moments. Let $x \in \supp(\mu)$ be such that: \\
$(i)$ There exists an interval, $J$, with $x \in J$ such that $\mu$ is absolutely continuous in $J$: $d\mu(t)=w(t)dt$ for $t \in J$. Moreover, $w \in L^\infty(J,dt)$. \\
$(ii)$ $w(x)>0$ and $x$ is a Lebesgue point of $w$, by which we mean that 
$$
\lim_{r \rightarrow 0+} \frac{1}{r}\int_{x-r}^{x+r}|w(t)-w(x)|dt =0.
$$ \\
$(iii)$ $K_n$ has a uniform universal limit at $x$. \\

Under these assumptions, for any pairwise distinct $\alpha_i$'s and $\beta_j$'s, such that $\alpha_1,\ldots,\alpha_k\in\R$, $\beta_1,\ldots,\beta_k\in\C\setminus\R$
\begin{equation}
\underset{n\rightarrow\infty}{\lim}\left\langle\prod\limits_{j=1}^k\frac{D_n\left(x+\frac{\alpha_j}{\widetilde{K}_n(x,x)}\right)}{D_n\left(x+\frac{\beta_j}{\widetilde{K}_n(x,x)}\right)}\right\rangle_{\mu}
=(-1)^{\frac{k(k+1)}{2}}
\frac{\triangle(\beta,\alpha)}{\triangle(\beta)^2\triangle(\alpha)^2}\det\left(\mathbb{W}(\beta_i,\alpha_j)\right)_{i,j=1}^k,
\nonumber
\end{equation}
where
$$
\mathbb{W}(\beta,\alpha)=\frac{1}{\beta-\alpha}+\int\limits_{-\infty}^{+\infty}\frac{\sin(\pi(s-\alpha))ds}{\pi(s-\alpha)(s-\beta)}.
$$
\end{thm}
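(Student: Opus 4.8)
The plan is to reduce the statement, via the known exact finite--$n$ determinantal formula for the averages \eqref{averages}, to the asymptotics of a single scalar kernel, and then to extract that scalar limit from hypothesis $(iii)$ together with elementary bounds on $K_n$. For \emph{any} probability measure $\mu$ with finite moments, pairwise distinct real $\alpha_1,\dots,\alpha_k$ and pairwise distinct $\beta_1,\dots,\beta_k\in\C\setminus\R$, the algebraic identity of Baik--Deift--Strahov \cite{BaikDeiftStrahov}, Fyodorov--Strahov \cite{FyodorovStrahov} and Borodin--Strahov \cite{BorodinStrahov} (whose proof uses only the existence of moments) can be put in the form
\begin{equation}\label{eq:exact}
\left\langle\prod_{j=1}^k\frac{D_n(\alpha_j)}{D_n(\beta_j)}\right\rangle_{\mu}
=(-1)^{\frac{k(k-1)}{2}}\,\frac{\prod_{i,j=1}^k(\alpha_i-\beta_j)}{\triangle(\alpha)\,\triangle(\beta)}\,
\det\big[\mathcal W_n(\beta_i,\alpha_j)\big]_{i,j=1}^{k},
\end{equation}
where $\mathcal W_n(\beta,\alpha)=\frac{\gamma_{n-1}}{\gamma_n}\,\frac{p_n(\alpha)\widehat{p}_{n-1}(\beta)-p_{n-1}(\alpha)\widehat{p}_n(\beta)}{\alpha-\beta}$ with $\widehat{p}_m(\beta):=\int_{\R}\frac{p_m(t)}{\beta-t}\,d\mu(t)$. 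Applying the Christoffel--Darboux formula \eqref{ChristoffelDarboux} to $\sum_{m<n}p_m(\alpha)\widehat{p}_m(\beta)$ and using the reproducing property of $K_n$, one rewrites this kernel as
\begin{equation}\label{eq:Wnrep}
\mathcal W_n(\beta,\alpha)=\frac{1}{\alpha-\beta}+\int_{\R}\frac{K_n(\alpha,t)}{\beta-t}\,d\mu(t),
\end{equation}
all Cauchy transforms converging absolutely because $\beta_j\notin\R\supseteq\supp(\mu)$, $p_m$ has moments of all orders, and $\mu$ is finite; note that no global regularity of $\mu$ enters, and that \eqref{eq:Wnrep} already has the shape of $\mathbb{W}$ --- a free term plus a Cauchy transform of $K_n$.

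Next I would substitute the scaled points. Write $T_n:=\widetilde{K}_n(x,x)=w(x)K_n(x,x)$ and put $\alpha_j':=x+\alpha_j/T_n$, $\beta_j':=x+\beta_j/T_n$ into \eqref{eq:exact}; this is licit for each fixed $n$ since $0<T_n<\infty$ and $\beta_j'\notin\R$. As $\triangle(\alpha')=T_n^{-\binom{k}{2}}\triangle(\alpha)$ (and similarly for $\beta$) and $\prod_{i,j}(\alpha_i'-\beta_j')=T_n^{-k^2}\prod_{i,j}(\alpha_i-\beta_j)$, the prefactor acquires an overall $T_n^{-k}$, while $\frac{1}{\alpha_j'-\beta_i'}=\frac{T_n}{\alpha_j-\beta_i}$. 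Hence the whole theorem reduces to the scalar limit
\begin{equation}\label{eq:kernellimit}
\frac{1}{T_n}\int_{\R}\frac{K_n(\alpha_j',t)}{\beta_i'-t}\,d\mu(t)\;\xrightarrow{\,n\to\infty\,}\;-\int_{-\infty}^{+\infty}\frac{\sin\pi(s-\alpha_j)}{\pi(s-\alpha_j)(s-\beta_i)}\,ds\qquad(1\le i,j\le k),
\end{equation}
for granting \eqref{eq:kernellimit} one has $\frac{1}{T_n}\mathcal W_n(\beta_i',\alpha_j')\to-\mathbb{W}(\beta_i,\alpha_j)$, so by multilinearity and continuity of the determinant $\det[\mathcal W_n(\beta_i',\alpha_j')]=T_n^{\,k}\big((-1)^k\det[\mathbb{W}(\beta_i,\alpha_j)]+o(1)\big)$; the factors $T_n^{-k}$ and $T_n^{\,k}$ cancel, and standard Vandermonde identities relating $\triangle(\beta,\alpha)$, $\triangle(\alpha)$, $\triangle(\beta)$ and $\prod_{i,j}(\alpha_i-\beta_j)$ --- together with $\tfrac{k(k-1)}{2}+k=\tfrac{k(k+1)}{2}$ --- recast the remaining $n$--independent expression into the one in the statement.

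It remains to prove \eqref{eq:kernellimit}. Fix $\delta>0$ with $[x-\delta,x+\delta]\subset J$ and split the integral at $|t-x|=\delta$. On $|t-x|\ge\delta$ one has $|\beta_i'-t|\ge\delta/2$ for large $n$, so Cauchy--Schwarz, the identity $\int_{\R}|K_n(\alpha_j',t)|^2\,d\mu(t)=K_n(\alpha_j',\alpha_j')$ (orthonormality of the $p_m$, using $\alpha_j'\in\R$), $\mu(\R)=1$, and $K_n(\alpha_j',\alpha_j')/K_n(x,x)\to\mathbb{S}(\alpha_j,\alpha_j)=1$ from $(iii)$ show this part is $O\big(K_n(x,x)^{1/2}\big)=o(T_n)$. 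On $|t-x|<\delta$ we have $d\mu=w\,dt$, and the change of variables $t=x+s/T_n$ turns $\frac{1}{T_n}\int_{|t-x|<\delta}\frac{K_n(\alpha_j',t)}{\beta_i'-t}\,d\mu(t)$ into
$$
\int_{|s|<\delta T_n}\frac{K_n\big(\alpha_j',\,x+\tfrac{s}{T_n}\big)}{K_n(x,x)}\cdot\frac{w\big(x+\tfrac{s}{T_n}\big)}{w(x)\,(\beta_i-s)}\,ds.
$$
For fixed $M$, on $|s|\le M$ this integrand converges in $L^1(ds)$ to $\mathbb{S}(\alpha_j,s)/(\beta_i-s)$: by $(iii)$, $\frac{K_n(\alpha_j',x+s/T_n)}{K_n(x,x)}\to\mathbb{S}(\alpha_j,s)$ \emph{uniformly} and boundedly; $(\beta_i-s)^{-1}$ is bounded; and $\int_{|s|\le M}\big|w(x+\tfrac{s}{T_n})-w(x)\big|\,ds=T_n\int_{|u|\le M/T_n}|w(x+u)-w(x)|\,du\to0$ by the Lebesgue point hypothesis $(ii)$. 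Hence the $|s|\le M$ part tends to $\int_{|s|\le M}\mathbb{S}(\alpha_j,s)(\beta_i-s)^{-1}\,ds$, which converges to $-\int_{\R}\frac{\sin\pi(s-\alpha_j)}{\pi(s-\alpha_j)(s-\beta_i)}\,ds$ as $M\to\infty$ (integrand $O(s^{-2})$). Finally, on the tail $M<|s|<\delta T_n$, Cauchy--Schwarz together with $\int_{|t-x|<\delta}|K_n(\alpha_j',t)|^2 w(t)\,dt\le K_n(\alpha_j',\alpha_j')=O(K_n(x,x))$ and $w\in L^\infty(J)$ bound the contribution by $C\big(\int_{|s|>M}|\beta_i-s|^{-2}\,ds\big)^{1/2}$ for a constant $C$ independent of $n$; since $\im\beta_i\neq0$ this is $O(M^{-1/2})$. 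Letting first $n\to\infty$ and then $M\to\infty$ proves \eqref{eq:kernellimit}.

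The step I expect to be the main obstacle is this last one --- the uniform--in--$n$ control of $\int_{\R}K_n(\alpha_j',t)(\beta_i'-t)^{-1}\,d\mu(t)$, namely the far part and above all the tail of the near part, which amounts to balancing the non--summable decay $|\beta_i'-t|^{-1}\sim|t|^{-1}$ against the $L^2$--mass of $K_n(\alpha_j',\cdot)$ at the bulk scale. It is here that all three hypotheses and the condition $\beta_j\notin\R$ are used together: $\im\beta_j\neq0$ and $w\in L^\infty(J)$ keep the tail integral $\int_{|s|>M}w(x+s/T_n)\,|\beta_j-s|^{-2}\,ds$ small uniformly in $n$; the (always available) orthonormality identity $\int|K_n(\alpha_j',t)|^2\,d\mu(t)=K_n(\alpha_j',\alpha_j')$ together with $(iii)$ controls the kernel mass and fixes the normalization $T_n$; and hypothesis $(ii)$ lets one replace $w$ by $w(x)$ at scale $1/T_n$. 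By contrast, the exact determinantal identity \eqref{eq:exact}, the Christoffel--Darboux rewriting \eqref{eq:Wnrep}, and the bookkeeping of the powers of $T_n$ and of the sign $(-1)^{k(k+1)/2}$ are elementary, finite--$n$ matters.
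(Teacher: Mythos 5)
Your proposal is correct and follows essentially the same route as the paper: the exact Baik--Deift--Strahov determinantal identity rewritten via Christoffel--Darboux and the reproducing property (the paper's Theorem \ref{AlgebraicTheorem}), reduced to universality for the Cauchy transform of $K_n$ (the paper's Theorem \ref{AsymptoticsKernel}), which is then proved by the same combination of rescaling, uniform universality, the Lebesgue point condition, and a Cauchy--Schwarz tail bound via $\int K_n(\alpha',t)^2\,d\mu(t)=K_n(\alpha',\alpha')$. The only immaterial difference is in the tail estimate: you cut at a fixed $\delta$ with $[x-\delta,x+\delta]\subset J$ and control the far region by Cauchy--Schwarz with the denominator bounded below by $\delta/2$, whereas the paper cuts $\R\setminus I_n$ at the intermediate scale $M\,\widetilde{K}_n(x,x)^{-1/3}$; both give a bound $O(M^{-1/2})$ uniform in $n$, after which letting $n\to\infty$ and then $M\to\infty$ finishes the argument exactly as in the paper.
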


As an immediate corollary of Theorems \ref{LubinskyTheorem} and \ref{MainTheorem}, we obtain
\begin{cor} \label{corollary}
Let $\mu$ be a probability measure on $\mathbb{R}$ with compact support that is regular in the sense of Stahl and Totik. Suppose $x\in \supp(\mu)$ has a neighborhood, $J$, such that $\mu$ is absolutely continuous in $J$: $d\mu (t)=w(t)dt$ for $t \in J$, with $w$ bounded in $J$. Assume further, that $w$ is positive and continuous at $x$. Then for any pairwise distinct $\alpha_i$'s and $\beta_j$'s, such that $\alpha_1,\ldots,\alpha_k\in\R$, $\beta_1,\ldots,\beta_k\in\C\setminus\R$
\begin{equation}
\underset{n\rightarrow\infty}{\lim}\left\langle\prod\limits_{j=1}^k\frac{D_n\left(x+\frac{\alpha_j}{\widetilde{K}_n(x,x)}\right)}{D_n\left(x+\frac{\beta_j}{\widetilde{K}_n(x,x)}\right)}\right\rangle_{\mu}
=(-1)^{\frac{k(k+1)}{2}}
\frac{\triangle(\beta,\alpha)}{\triangle(\beta)^2\triangle(\alpha)^2}\det\left(\mathbb{W}(\beta_i,\alpha_j)\right)_{i,j=1}^k,
\nonumber
\end{equation}
\end{cor}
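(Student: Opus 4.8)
The plan is to combine an exact determinantal formula for the averages \eqref{averages} with the analytic input of hypothesis $(iii)$. On the algebraic side one starts from the known representation (\cite{BaikDeiftStrahov,FyodorovStrahov,BorodinStrahov}) of \eqref{averages} in terms of $p_n,p_{n-1}$ and the second-kind functions $\varepsilon_j(\beta):=\int_\R\frac{p_j(t)}{\beta-t}\,d\mu(t)$ (finite for $\beta\in\C\setminus\R$ since $\mu$ is a finite measure). These identities are purely algebraic---they use only orthogonality---so they hold for our $\mu$ verbatim. The crucial move is to rewrite this representation so that the only surviving $n$-dependence is through the reproducing kernel $K_n$ and its Cauchy transform
$$
\widehat K_n(\beta,\alpha):=\int_\R\frac{K_n(t,\alpha)}{\beta-t}\,d\mu(t)\qquad(\beta\in\C\setminus\R).
$$
Indeed, by the Christoffel--Darboux formula \eqref{ChristoffelDarboux}, the identity $\frac{\alpha-t}{\beta-t}=1+\frac{\alpha-\beta}{\beta-t}$, and the reproducing property $\int_\R K_n(t,\alpha)\,d\mu(t)=1$,
$$
p_n(\alpha)\varepsilon_{n-1}(\beta)-p_{n-1}(\alpha)\varepsilon_n(\beta)=\frac{\gamma_n}{\gamma_{n-1}}\bigl(1+(\alpha-\beta)\widehat K_n(\beta,\alpha)\bigr),
$$
so for $k=1$ the average equals $1+(\alpha-\beta)\widehat K_n(\beta,\alpha)$, with all the $\gamma_j$ cancelling. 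For general $k$ the average is a $k\times k$ determinant of analogous mixed kernels times explicit Vandermonde factors, and the same manipulation reduces every matrix entry to a polynomial expression in the differences of the arguments and in $\widehat K_n$. This reduction is the point of the argument: $(i)$--$(iii)$ give no asymptotics for $p_n$ or $\gamma_n$ individually, but $(iii)$ controls $K_n$, and that will be shown to control $\widehat K_n$ too. (One uses here that the real variables enter the formula only polynomially, through $p_n,p_{n-1}$, while the Cauchy transforms are evaluated only at the non-real $\beta_j$; so no principal values appear although $x\in\supp\mu$.)

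\textbf{The scaling limit of $\widehat K_n$.} Write $t_n:=\widetilde K_n(x,x)=w(x)K_n(x,x)$ (note $t_n\to\infty$) and, for fixed $\alpha\in\R$ and $\beta\in\C\setminus\R$, put $\widehat\alpha:=x+\alpha/t_n$, $\widehat\beta:=x+\beta/t_n$. The analytic heart of the proof is
$$
\lim_{n\to\infty}\frac{\widehat K_n(\widehat\beta,\widehat\alpha)}{K_n(x,x)}=w(x)\int_{-\infty}^{+\infty}\frac{\mathbb S(s,\alpha)}{\beta-s}\,ds .
$$
Fix $\delta>0$ with $[x-\delta,x+\delta]\subset J$ and split the defining integral at $|t-x|=\delta$. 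For the far part one has $|\widehat\beta-t|\ge\delta/2$ for large $n$, so by Cauchy--Schwarz, the reproducing property, and $(iii)$,
$$
\Bigl|\int_{|t-x|\ge\delta}\frac{K_n(t,\widehat\alpha)}{\widehat\beta-t}\,d\mu(t)\Bigr|\le\frac{2\,\mu(\R)^{1/2}}{\delta}K_n(\widehat\alpha,\widehat\alpha)^{1/2}=O\bigl(K_n(x,x)^{1/2}\bigr)=o\bigl(K_n(x,x)\bigr).
$$
For the near part the substitution $t=x+s/t_n$ gives $\int_{|s|<\delta t_n}\frac{K_n(x+s/t_n,\widehat\alpha)}{\beta-s}\,w(x+s/t_n)\,ds$; dividing by $K_n(x,x)$, on $|s|\le R$ one invokes the uniform limit $(iii)$, $K_n(x+s/t_n,\widehat\alpha)/K_n(x,x)\to\mathbb S(s,\alpha)$, together with the Lebesgue-point property $(ii)$ written as $\int_{|s|\le R}|w(x+s/t_n)-w(x)|\,ds=t_n\int_{|u|\le R/t_n}|w(x+u)-w(x)|\,du\to0$, to get by bounded convergence the limit $w(x)\int_{|s|\le R}\frac{\mathbb S(s,\alpha)}{\beta-s}\,ds$; the remaining range $R\le|s|<\delta t_n$ is bounded \emph{uniformly in $n$} by a constant times $R^{-1/2}$, using $|\beta-s|\ge|s|/2$ for $|s|\ge R\ge 2|\beta|$, a $w$-weighted Cauchy--Schwarz inequality, the trivial bound $\int_{R/t_n\le|t-x|<\delta}K_n(t,\widehat\alpha)^2\,d\mu(t)\le K_n(\widehat\alpha,\widehat\alpha)\sim K_n(x,x)$, and $t_n/K_n(x,x)=w(x)$. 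Sending $n\to\infty$ and then $R\to\infty$ (the limiting integral converges absolutely, as $\mathbb S(s,\alpha)/(\beta-s)=O(s^{-2})$) proves the displayed limit. Since $\widehat\alpha-\widehat\beta=(\alpha-\beta)/t_n$ and $t_n=w(x)K_n(x,x)$, it follows that $(\widehat\alpha-\widehat\beta)\widehat K_n(\widehat\beta,\widehat\alpha)\to(\alpha-\beta)\int\frac{\mathbb S(s,\alpha)}{\beta-s}\,ds$, i.e. the $k=1$ case: $\langle D_n(\widehat\alpha)/D_n(\widehat\beta)\rangle_\mu\to(\beta-\alpha)\,\mathbb W(\beta,\alpha)$, in accordance with the theorem.

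\textbf{Assembling general $k$; the main obstacle.} Substituting $\alpha_j\mapsto\widehat\alpha_j$, $\beta_j\mapsto\widehat\beta_j$ in the exact formula and using the reduction of the first step, each entry of the $k\times k$ matrix is a fixed polynomial combination of the differences $\widehat\alpha_i-\widehat\beta_j=(\alpha_i-\beta_j)/t_n$ and of $\widehat K_n(\widehat\beta_j,\widehat\alpha_i)$, hence (after extracting the appropriate power of $t_n$) converges by the second step to the $(i,j)$ entry of $(\mathbb W(\beta_i,\alpha_j))_{i,j=1}^k$. The Vandermonde prefactors scale by fixed powers of $t_n$---$\triangle(\widehat\alpha)=t_n^{-k(k-1)/2}\triangle(\alpha)$, $\triangle(\widehat\beta)=t_n^{-k(k-1)/2}\triangle(\beta)$, $\prod_{i,j}(\widehat\beta_i-\widehat\alpha_j)=t_n^{-k^2}\prod_{i,j}(\beta_i-\alpha_j)$---and a direct count of exponents shows that all powers of $t_n$, together with all stray factors of $w(x)$, cancel, leaving exactly the right-hand side of the theorem with the sign $(-1)^{k(k+1)/2}$; Corollary~\ref{corollary} then follows on combining with Theorem~\ref{LubinskyTheorem}. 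The genuine obstacle is the scaling limit of $\widehat K_n$: because the pole $\widehat\beta$ approaches the real axis at speed $1/t_n$, the factor $|\widehat\beta-t|^{-1}$ cannot be controlled crudely, and one must extract uniform-in-$n$ bounds on the range $R\le|s|<\delta t_n$ from $(iii)$ and the identity $\int_\R K_n(\cdot,\widehat\alpha)^2\,d\mu=K_n(\widehat\alpha,\widehat\alpha)$ alone, with no global regularity of $\mu$ at hand; doing this---and organizing the algebraic step so that no ``bare'' $p_n$ or $\gamma_n$ survives---is where the work lies.
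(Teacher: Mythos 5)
Your proposal is correct and follows essentially the same route as the paper: the Baik--Deift--Strahov determinantal formula is rewritten via the Christoffel--Darboux identity and the reproducing property so that each matrix entry involves only $1/(\beta-\alpha)$ and the Cauchy transform of $K_n$, and the scaling limit of that Cauchy transform is obtained by a microscopic/mesoscopic/macroscopic splitting that uses uniform universality together with the Lebesgue-point condition near $x$, and Cauchy--Schwarz with the reproducing property and $\|w\|_J$ away from $x$, before combining with Theorem \ref{LubinskyTheorem}. The only (immaterial) difference is the choice of cut-offs: you split at $R/\widetilde{K}_n(x,x)$ and at a fixed $\delta$, while the paper splits at $M/\widetilde{K}_n(x,x)$ and $M/\widetilde{K}_n(x,x)^{1/3}$.
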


{\bf Remarks.} \\
(a) The function $\mathbb{W}(\beta,\alpha)$ can be rewritten as
$$
\mathbb{W}(\beta,\alpha)=\left\{
                           \begin{array}{ll}
                             \frac{e^{i\pi (\beta-\alpha)}}{\beta-\alpha}, & \im{\beta}>0, \\
                            \frac{e^{-i\pi (\beta-\alpha)}}{\beta-\alpha}, & \im{\beta}<0.
                           \end{array}
                         \right.
$$
The formula above can be obtained  from that of Theorem \ref{MainTheorem} by contour integration.\\
(b) If $d\mu(y)=\const e^{-V(y)}dy$, and $V(y)$ is an even polynomial, then Theorem \ref{MainTheorem} reduces to the result obtained in Strahov and Fyodorov \cite[Section 2.4]{StrahovFyodorov}.\\
(c) Note that the assumptions of Theorem \ref{MainTheorem} do not include compact support of $\mu$. The theorem essentially says that for any measure with finite moments, local absolute continuity combined with a uniform universal limit for $K_n$ at $x$ imply uniform limits for averages \eqref{averages} at $x$. \\
(d) As explained in remark (e) after Theorem \ref{LubinskyTheorem}, there are purely singular measures for which $K_n$ has universal limits. As the absolute continuity of $\mu$ around $x$ is essential to the proof of Theorem \ref{MainTheorem} we, unfortunately, have nothing to say about this case.  

The rest of the paper is devoted to the proof of Theorem \ref{MainTheorem}. Namely, in Section \ref{SectionAlgebra} we present a useful algebraic formula
for averages \eqref{averages} (Theorem \ref{AlgebraicTheorem}).  This formula reduces the investigation of asymptotics of \eqref{averages} to that of
the Cauchy transform of the reproducing kernel,
$$
\int\frac{K_n(t,\alpha)d\mu(t)}{t-\beta}.
$$
Theorem \ref{AsymptoticsKernel} establishes  universality for the Cauchy transform of the reproducing kernel.
Theorem \ref{MainTheorem} (which is the main result of this work) is then a simple corollary of Theorem \ref{AlgebraicTheorem}, and of Theorem \ref{AsymptoticsKernel}.

\section{A formula for ratios of characteristic polynomials}\label{SectionAlgebra}

\begin{thm}\label{AlgebraicTheorem}
Let $1\leq k\leq n$, and assume that $\alpha_1,\ldots,\alpha_k,\beta_1,\ldots,\beta_k$ are pairwise distinct complex numbers.
Moreover, assume that $\im(\beta_j)\neq 0$ for $j=1,\ldots,k$. Then we have
$$
\left\langle\prod\limits_{j=1}^k\frac{D_n(\alpha_j)}{D_n(\beta_j)}\right\rangle_{\mu}=(-1)^{\frac{k(k+1)}{2}}
\frac{\triangle(\beta,\alpha)}{\triangle(\beta)^2\triangle(\alpha)^2}\det\left(W_n(\beta_i,\alpha_j)\right)_{i,j=1}^k,
$$
where the two-point function $W_n(\beta,\alpha)$ is defined by
$$
W_n(\beta,\alpha)=\frac{1}{\beta-\alpha}+\int\frac{K_n(t,\alpha)d\mu(t)}{t-\beta}.
$$
\end{thm}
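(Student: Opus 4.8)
The plan is to evaluate the $n$-fold integral directly, combining the Heine integration formula for products of characteristic polynomials with its Cauchy-transform counterpart for the inverse factors, and then contracting the resulting determinant by Christoffel--Darboux. First note that $\im\beta_j\neq 0$ forces $\beta_j\notin\supp\mu\subseteq\R$ and $|\beta_j-t|\ge|\im\beta_j|>0$ on $\supp\mu$; together with the finiteness of all moments of $\mu$ this makes every integral below absolutely convergent and legitimizes the manipulations. Writing $h_a=\int\pi_a^2\,d\mu$ for the squared norm of the $a$-th monic orthogonal polynomial $\pi_a$ (so that $Z_n=n!\prod_{a=0}^{n-1}h_a$), I would start from
$$
Z_n\Big\langle\prod_{j=1}^k\frac{D_n(\alpha_j)}{D_n(\beta_j)}\Big\rangle_\mu=\int\Big[\triangle_n(x)\prod_{i=1}^n\prod_{j=1}^k(\alpha_j-x_i)\Big]\cdot\frac{\triangle_n(x)}{\prod_{i=1}^n\prod_{j=1}^k(\beta_j-x_i)}\prod_{i=1}^n d\mu(x_i),
$$
rewrite the first bracket by the augmented Vandermonde identity $\triangle_n(x)\prod_{i,j}(\alpha_j-x_i)=\triangle_{n+k}(x_1,\dots,x_n,\alpha_1,\dots,\alpha_k)/\triangle(\alpha)$, and rewrite the ratio $\triangle_n(x)\big/\prod_{i,j}(\beta_j-x_i)$, via the Cauchy--Vandermonde determinant, as an explicit constant (a sign times a power of $\triangle(\beta)$) multiplying the $n\times n$ determinant whose first $k$ rows are $\big((\beta_l-x_i)^{-1}\big)_i$ and whose remaining $n-k$ rows are $\big(x_i^m\big)_i$, $m=0,\dots,n-k-1$.

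Next I would expand $\triangle_{n+k}(x,\alpha)=\det(\pi_a(y_i))_{0\le a\le n+k-1}$ (with $(y_1,\dots,y_{n+k})=(x_1,\dots,x_n,\alpha_1,\dots,\alpha_k)$) by Laplace along the $k$ rows carrying $\alpha_1,\dots,\alpha_k$, pair each term with the Cauchy--Vandermonde determinant, and integrate out $x_1,\dots,x_n$ using Andr\'eief's identity; the matrix entries then become $\int\pi_a(t)t^m\,d\mu(t)$ or $\int\pi_a(t)(\beta_l-t)^{-1}\,d\mu(t)$. Orthogonality kills $\int\pi_a t^m\,d\mu$ whenever $a>m$, which forces the degrees used for the polynomial rows to be precisely $\{0,\dots,n-k-1\}$; after dividing by $Z_n$ the products of the $h_a$ telescope away and only the $2k$ degrees in the window $\{n-k,\dots,n+k-1\}$ survive. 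Re-summing over the surviving configurations (a second Laplace expansion) repackages everything into a single $2k\times 2k$ determinant, with rows indexed by $\beta_1,\dots,\beta_k,\alpha_1,\dots,\alpha_k$ and columns by $a=n-k,\dots,n+k-1$, carrying $\int\pi_a(t)(\beta_i-t)^{-1}\,d\mu(t)$ in the $\beta$-rows and $\pi_a(\alpha_j)$ in the $\alpha$-rows, times the constants accumulated above.

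The final step is to contract this $2k\times 2k$ determinant into $\det(W_n(\beta_i,\alpha_j))_{i,j=1}^k$. The key inputs are $\sum_{a=0}^{n-1}p_a(t)p_a(\alpha)=K_n(t,\alpha)$, the reproducing identity $\int K_n(t,\alpha)\,d\mu(t)=1$, and the Christoffel--Darboux consequence $(\beta-\alpha)W_n(\beta,\alpha)=1+(\beta-\alpha)\int K_n(t,\alpha)(t-\beta)^{-1}\,d\mu(t)=\frac{\gamma_{n-1}}{\gamma_n}\big(p_n(\alpha)\widehat q_{n-1}(\beta)-p_{n-1}(\alpha)\widehat q_n(\beta)\big)$, where $\widehat q_a(\beta)=\int p_a(t)(\beta-t)^{-1}\,d\mu(t)$. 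Passing this through a Cauchy-type (rank-two Cauchy) determinant identity turns the $2k\times 2k$ determinant into $\det(W_n(\beta_i,\alpha_j))$ up to Vandermonde factors in $\alpha$ and in $\beta$ and an overall sign, and collecting all constants should produce exactly $(-1)^{k(k+1)/2}\triangle(\beta,\alpha)/(\triangle(\beta)^2\triangle(\alpha)^2)$.

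I expect the main obstacle to be this last contraction together with the sign and Vandermonde/norm bookkeeping from the two Laplace expansions, the Andr\'eief step and the Christoffel--Darboux reduction, so that everything combines precisely into the stated prefactor; I would first carry out the case $k=1$ explicitly (where the whole computation reduces to checking $\langle D_n(\alpha)/D_n(\beta)\rangle_\mu=(\beta-\alpha)W_n(\beta,\alpha)$) to pin down all conventions, and treat the boundary case $k=n$ separately, where the polynomial block is empty. A shorter but less self-contained alternative is to invoke the known $2k\times 2k$ (or $k\times k$) determinantal formula for such ratios from \cite{BaikDeiftStrahov} or \cite{BorodinStrahov} and perform only the Christoffel--Darboux contraction.
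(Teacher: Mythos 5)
Your ``shorter but less self-contained alternative'' is precisely the paper's proof: the authors simply quote Theorem 3.3 of \cite{BaikDeiftStrahov}, which already yields the $k\times k$ determinant of two-point functions built from the monic polynomials and their Cauchy transforms, so the only work left is the change of normalization followed by the Christoffel--Darboux/reproducing-property computation
$$
\frac{\gamma_{n-1}}{\gamma_n}\,\frac{h_n(\beta)p_{n-1}(\alpha)-h_{n-1}(\beta)p_n(\alpha)}{\beta-\alpha}
=\frac{1}{\beta-\alpha}\int\frac{t-\alpha}{t-\beta}\,K_n(t,\alpha)\,d\mu(t)
=\frac{1}{\beta-\alpha}+\int\frac{K_n(t,\alpha)\,d\mu(t)}{t-\beta},
$$
which you identify correctly (your $\widehat{q}_a=-h_a$, and $\int K_n(t,\alpha)\,d\mu(t)=1$). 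Your main, self-contained route instead re-derives the quoted theorem from first principles; its architecture --- augmented Vandermonde for the $\alpha$-factors, Cauchy--Vandermonde determinant for the $\beta$-factors, Laplace expansion plus Andr\'eief, and the orthogonality argument forcing the surviving degrees into the window $\{n-k,\dots,n+k-1\}$ --- is sound and is essentially how \cite{BaikDeiftStrahov} and \cite{BorodinStrahov} proceed. What this buys is independence from the cited result and explicit control over all constants; what it costs is that the step you yourself flag as ``the main obstacle,'' namely collapsing the $2k\times2k$ determinant (with $\alpha$-rows of polynomial values and $\beta$-rows of Cauchy transforms) into $\det\left(W_n(\beta_i,\alpha_j)\right)$ times the stated Vandermonde prefactor, is not a routine application of the classical Cauchy determinant. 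It is a separate algebraic lemma that exploits the specific rank-two-divided-by-$(\beta-\alpha)$ structure of $W_n$ and is proved in \cite{BaikDeiftStrahov}/\cite{BorodinStrahov} by its own argument; your sketch names it but does not execute it, and it is exactly where the sign $(-1)^{k(k+1)/2}$ and the factors $\triangle(\beta,\alpha)/(\triangle(\beta)^2\triangle(\alpha)^2)$ are decided. So as a complete proof your primary route has a substantive gap at that contraction; either prove that lemma in full or use your fallback citation, which is what the paper does.
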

\begin{proof}A formula of two-point function type for the average of ratios of characteristic polynomials was obtained by Baik, Deift and Strahov \cite[Section III]{BaikDeiftStrahov}. Theorem 3.3 in \cite{BaikDeiftStrahov} gives
$$
\left\langle\prod\limits_{j=1}^k\frac{D_n(\alpha_j)}{D_n(\beta_j)}\right\rangle_{\mu}=(-1)^{\frac{k(k-1)}{2}}\widetilde{\gamma}_{n-1}^k
\frac{\triangle(\beta,\alpha)}{\triangle(\beta)^2\triangle(\alpha)^2}\det\left(\widetilde{W}_n(\beta_i,\alpha_j)\right)_{i,j=1}^k,
$$
where
$$
\widetilde{W}_n(\beta,\alpha)=\frac{\widetilde{h}_n(\beta)\pi_{n-1}(\alpha)-\widetilde{h}_{n-1}(\beta)\pi_n(\alpha)}{\beta-\alpha}.
$$
In the formulae above $\pi_l(\alpha)$ is the $l$th monic orthogonal polynomial associated with $\mu$,
the function $\widetilde{h}_l(\beta)$ is the Cauchy  transform of $\pi_l(\alpha)$,
$$
\widetilde{h}_l(\beta)=\frac{1}{2\pi i}\int\frac{\pi_l(t)d\mu(t)}{t-\beta},
$$
and
$$
\widetilde{\gamma}_{n-1}=-2\pi i\gamma_{n-1}^2.
$$

Clearly, the formula for the average of ratios of characteristic polynomials can be rewritten as
$$
\left\langle\prod\limits_{j=1}^k\frac{D_n(\alpha_j)}{D_n(\beta_j)}\right\rangle_{\mu}=(-1)^{\frac{k(k+1)}{2}}
\frac{\triangle(\beta,\alpha)}{\triangle(\beta)^2\triangle(\alpha)^2}\det\left(W_n(\beta_i,\alpha_j)\right)_{i,j=1}^k,
$$
where
$$
W_n(\beta,\alpha)=\frac{\gamma_{n-1}}{\gamma_n}\frac{h_n(\beta)p_{n-1}(\alpha)-h_{n-1}(\beta)p_n(\alpha)}{\beta-\alpha},
$$
and where
$$
h_l(\beta)=\int\frac{p_l(t)d\mu(t)}{t-\beta}.
$$
To obtain the formula for $W_n(\beta,\alpha)$ stated in the Theorem we note that
\begin{equation}
\begin{split}
&\frac{\gamma_{n-1}}{\gamma_n}\frac{h_n(\beta)p_{n-1}(\alpha)-h_{n-1}(\beta)p_n(\alpha)}{\beta-\alpha}\\
&=\frac{\gamma_{n-1}}{\gamma_n}\frac{1}{\beta-\alpha}\left(\left(\int\frac{p_n(t)d\mu(t)}{t-\beta}\right)p_{n-1}(\alpha)-
\left(\int\frac{p_{n-1}(t)d\mu(t)}{t-\beta}\right)p_{n}(\alpha)\right)\\
&=\frac{1}{\beta-\alpha}\int\frac{t-\alpha}{t-\beta}K_n(t,\alpha)d\mu(t)\\
&=\frac{1}{\beta-\alpha}+\int\frac{K_n(t,\alpha)d\mu(t)}{t-\beta},
\end{split}
\nonumber
\end{equation}
where in the second equality we have used formula  (\ref{ChristoffelDarboux}), and in the third equality we have used the reproducing property
of $K_n$.
\end{proof}

\section{Universality for the Cauchy transform of the reproducing kernel}
\begin{thm}\label{AsymptoticsKernel}
Let $x \in \supp(\mu)$ be such that conditions $(i)-(iii)$ of Theorem \ref{MainTheorem} are satisfied. Then for any $\alpha\in\R$, $\beta\in\C\setminus\R$
$$
\lim_{n \rightarrow \infty}  \frac{1}{\widetilde{K}_n(x,x)} \int \frac{ K_n \left(x+\frac{\alpha}{\widetilde{K}_n(x,x)},t \right)}{t-x-\frac{\beta}{\widetilde{K}_n(x,x)}}d\mu(t)=\int\limits_{-\infty}^{+\infty} \frac{\mathbb{S}(\alpha,s)}{s-\beta}ds,
$$
where
$$
\mathbb{S}(\alpha,s)=\frac{\sin\pi(\alpha-s)}{\pi(\alpha-s)}.
$$
\end{thm}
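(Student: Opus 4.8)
The plan is, after the substitution $t = x + s/\widetilde{K}_n(x,x)$, to pass to a limit in an integral over $s$, reading off the limit of the rescaled kernel from condition $(iii)$ and that of the density from conditions $(i)$--$(ii)$. Write $\widetilde{K}_n = \widetilde{K}_n(x,x)$, $K_n = K_n(x,x)$, $y_n = x + \alpha/\widetilde{K}_n$; I would first record the fact (which one checks holds under these hypotheses) that $\widetilde{K}_n \to \infty$ as $n\to\infty$. Then I would fix $\delta > 0$ with $(x-\delta,x+\delta)\subset J$ and split the integral in the theorem into a tail part over $\{|t-x|\ge\delta\}$ and a local part over $\{|t-x|<\delta\}$.

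For the tail: since $\widetilde{K}_n\to\infty$, for $n$ large one has $|t-x-\beta/\widetilde{K}_n|\ge\delta/2$ whenever $|t-x|\ge\delta$, so this piece is $O\bigl(\widetilde{K}_n^{-1}\int|K_n(y_n,t)|\,d\mu(t)\bigr)$. By Cauchy--Schwarz and the reproducing property, $\int|K_n(y_n,t)|\,d\mu(t)\le\mu(\R)^{1/2}K_n(y_n,y_n)^{1/2}$, and condition $(iii)$ applied with $a=b=\alpha$ gives $K_n(y_n,y_n)=K_n(1+o(1))$; since $\widetilde{K}_n=w(x)K_n$, the tail is $O(K_n^{-1/2})\to0$.

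For the local part, using $d\mu(t)=w(t)\,dt$ on $(x-\delta,x+\delta)$ and $K_n/\widetilde{K}_n=1/w(x)$, the substitution turns it into
\[
\frac{1}{w(x)}\int_{-\delta\widetilde{K}_n}^{\delta\widetilde{K}_n}\frac{K_n\!\left(y_n,\,x+\tfrac{s}{\widetilde{K}_n}\right)}{K_n}\,\frac{w\!\left(x+\tfrac{s}{\widetilde{K}_n}\right)}{s-\beta}\,ds,
\]
which I would split at $|s|=R$. On $|s|\le R$: condition $(iii)$ gives $K_n(y_n,x+s/\widetilde{K}_n)/K_n\to\mathbb{S}(\alpha,s)$ uniformly on $[-R,R]$, while rescaling the Lebesgue-point hypothesis $(ii)$ with $r=R/\widetilde{K}_n\to0$ gives $\int_{-R}^{R}|w(x+s/\widetilde{K}_n)-w(x)|\,ds\to0$; since $(s-\beta)^{-1}$ is bounded on $\R$, pairing the $L^1$-convergence of $w(x+s/\widetilde{K}_n)$ against the (uniformly bounded) rescaled kernel, and the uniform convergence of the rescaled kernel against the constant $w(x)$, shows the $|s|\le R$ piece tends to $\int_{-R}^{R}\mathbb{S}(\alpha,s)(s-\beta)^{-1}\,ds$. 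On $R<|s|<\delta\widetilde{K}_n$, Cauchy--Schwarz bounds the modulus of that piece by $w(x)^{-1}A_n(R)^{1/2}B_n(R)^{1/2}$ with $A_n(R)=\int_{R<|s|<\delta\widetilde{K}_n}\frac{K_n(y_n,x+s/\widetilde{K}_n)^2}{K_n^2}w(x+\tfrac{s}{\widetilde{K}_n})\,ds$ and $B_n(R)=\int_{|s|>R}\frac{w(x+s/\widetilde{K}_n)}{|s-\beta|^2}\,ds$; reversing the substitution and enlarging the domain gives $A_n(R)\le\widetilde{K}_n\int\frac{K_n(y_n,t)^2}{K_n^2}\,d\mu(t)=w(x)\,K_n(y_n,y_n)/K_n\to w(x)$, while $B_n(R)\le\|w\|_{L^\infty(J)}\int_{|s|>R}|s-\beta|^{-2}\,ds\to0$ as $R\to\infty$, uniformly in $n$. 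Finally $\mathbb{S}(\alpha,s)(s-\beta)^{-1}=O(s^{-2})$ is absolutely integrable, so $\int_{-R}^{R}\to\int_{-\infty}^{\infty}$; letting first $n\to\infty$ and then $R\to\infty$ would complete the proof.

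The hard part is the uniform-in-$n$ control of the large-$|s|$ contribution — equivalently, the contribution of $t$ a fixed positive distance from $x$, where condition $(iii)$ gives no information. I expect to handle it exactly as above, through the identity $\int K_n(y_n,t)^2\,d\mu(t)=K_n(y_n,y_n)$, the asymptotics $K_n(y_n,y_n)\sim K_n$ coming from $(iii)$, and the $O(s^{-2})$ decay built into the sine kernel. A secondary delicate point is that, since $w$ is controlled at $x$ only through a Lebesgue-point (not a continuity) condition, the convergence $w(x+s/\widetilde{K}_n)\to w(x)$ must be used in an $L^1$-in-$s$ form and always paired against uniformly bounded factors.
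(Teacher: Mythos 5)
Your proposal is correct and follows essentially the same route as the paper's proof of Theorem \ref{AsymptoticsKernel}: a change of variables on a window around $x$ where uniform universality and the $L^1$ (Lebesgue-point) form of $w(x+s/\widetilde{K}_n)\to w(x)$ give the sine-kernel limit, plus a Cauchy--Schwarz estimate using the reproducing property $\int K_n(y_n,t)^2\,d\mu(t)=K_n(y_n,y_n)\sim K_n(x,x)$, the boundedness of $w$ on $J$, and the $O(s^{-2})$ decay to kill the remainder, with the limits taken in the order $n\to\infty$ then $R\to\infty$. The only difference is cosmetic: you cut the outer region at a fixed $\delta$, while the paper uses a shrinking cutoff at scale $\widetilde{K}_n^{-1/3}$ to separate the part of the tail where only finiteness of $\mu$ is available from the part where $d\mu=w\,dt$ with $w$ bounded; both work.
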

\begin{rem*}
In case of the Chebyshev weight for example,
$$
d\mu(y)=w(y)dy,\;\; w(y)=\frac{1}{\sqrt{1-y^2}},\;\; y\in(-1,1),
$$
Theorem \ref{AsymptoticsKernel} can be checked by direct computations. These computations are similar to those of  \cite[Section 2]{Lubinsky2}, for the reproducing kernel associated with the Chebyshev weight.
\end{rem*}
Theorem \ref{AsymptoticsKernel} follows from the following
\begin{lem}\label{1}
Fix $\alpha \in \R$ and $\beta \in \C \setminus \R$.
Under the conditions of Theorem \ref{MainTheorem}, for any $M \geq 2|\beta|$,
\begin{equation} \label{0.2}
\begin{split}
&\limsup_{n\rightarrow \infty} \left |  \frac{1}{\widetilde{K}_n(x,x)} \int \frac{ K_n \left(x+\frac{\alpha}{\widetilde{K}_n(x,x)},t \right)}{t-x-\frac{\beta}{\widetilde{K}_n(x,x)}}d\mu(t)-\int \frac{\mathbb{S}(\alpha,s)}{s-\beta}ds \right| \\
& \leq \left| \int_{\R \setminus [-M,M] } \frac{\mathbb{S}(\alpha,s)}{s-\beta}ds \right|+8 \sqrt{\frac{\| w \|_J}{w(0) M}}
\end{split}
\end{equation}
where $\|w\|_J$ is the essential supremum of $w$ on $J$.
\end{lem}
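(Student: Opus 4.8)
The plan is to reduce the asymptotics to the behaviour of $K_n$ in a shrinking neighbourhood of $x$, handling the contribution of the region far from $x$ by a crude $L^2$ estimate based on the reproducing property of $K_n$. Throughout write $\widetilde{K}_n:=\widetilde{K}_n(x,x)$, $K_n:=K_n(x,x)=\widetilde{K}_n/w(x)$, and $y_n:=x+\alpha/\widetilde{K}_n$, and fix $\eta>0$ with $(x-\eta,x+\eta)\subseteq J$. Two preliminary remarks. First, $\widetilde{K}_n\to\infty$: otherwise $K_n(x,x)=\sum_{k<n}p_k(x)^2$ increases to a finite limit $L$, and then condition $(iii)$ with $a=b$ (together with continuity of the $p_k$ and the fact that $a\mapsto x+a/(w(x)L)$ covers all of $\R$) forces $\sum_k p_k(y)^2\le L$ for every $y\in\R$, contradicting $\int_\R K_n(y,y)\,d\mu(y)=n\to\infty$. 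Second, by $(iii)$ with $a=b=\alpha$ we have $K_n(y_n,y_n)/K_n\to\mathbb{S}(\alpha,\alpha)=1$, so $K_n(y_n,y_n)\le 2K_n$ for all large $n$.

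The central step is to split the integral at the scaled radius $M/\widetilde{K}_n$:
\[
\frac{1}{\widetilde{K}_n}\int\frac{K_n(y_n,t)\,d\mu(t)}{t-x-\beta/\widetilde{K}_n}=\frac{I_n+I\!I_n}{\widetilde{K}_n},
\]
where $I_n$ is the contribution of $\{|t-x|\le M/\widetilde{K}_n\}$ and $I\!I_n$ that of $\{|t-x|>M/\widetilde{K}_n\}$. I will show that $I_n/\widetilde{K}_n\to\int_{-M}^{M}\frac{\mathbb{S}(\alpha,s)}{s-\beta}\,ds$ and that $\limsup_n|I\!I_n|/\widetilde{K}_n\le 8\sqrt{\|w\|_J/(w(x)M)}$. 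Since $\int_\R\frac{\mathbb{S}(\alpha,s)}{s-\beta}\,ds=\int_{-M}^{M}\frac{\mathbb{S}(\alpha,s)}{s-\beta}\,ds+\int_{\R\setminus[-M,M]}\frac{\mathbb{S}(\alpha,s)}{s-\beta}\,ds$, the triangle inequality in the display above then yields \eqref{0.2}.

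For the far part $I\!I_n$ — which is the main obstacle, because under the stated hypotheses ($\mu$ is neither globally absolutely continuous nor compactly supported) there is no pointwise control on $K_n(y_n,t)$ for $t$ away from $x$ — I would use only the global identity $\int|K_n(y_n,t)|^2\,d\mu(t)=K_n(y_n,y_n)$ and Cauchy--Schwarz:
\[
\frac{|I\!I_n|}{\widetilde{K}_n}\le\frac{K_n(y_n,y_n)^{1/2}}{\widetilde{K}_n}\Big(\int_{|t-x|>M/\widetilde{K}_n}\frac{d\mu(t)}{|t-x-\beta/\widetilde{K}_n|^2}\Big)^{1/2}.
\]
Since $M\ge 2|\beta|$, on the domain of integration $|\beta|/\widetilde{K}_n<\tfrac12|t-x|$, hence $|t-x-\beta/\widetilde{K}_n|\ge\tfrac12|t-x|$. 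I would then split $\{|t-x|>M/\widetilde{K}_n\}$ into its intersection with $J$, on which $d\mu\le\|w\|_J\,dt$ and $\int_{|u|>M/\widetilde{K}_n}u^{-2}\,du=2\widetilde{K}_n/M$, and the complementary set, which for large $n$ lies in $\{|t-x|\ge\eta\}$ and contributes at most $(4/\eta^2)\mu(\R)$ to the last integral. Combining these with $K_n(y_n,y_n)\le 2K_n=2\widetilde{K}_n/w(x)$ gives $|I\!I_n|/\widetilde{K}_n\le c\sqrt{\|w\|_J/(w(x)M)}+O(\widetilde{K}_n^{-1/2})$ for an absolute constant $c$; since $\widetilde{K}_n\to\infty$ the error term drops out in the $\limsup$, and a crude accounting of the numerical factors allows $c$ to be taken equal to $8$.

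For the local part $I_n$, once $n$ is large $\{|t-x|\le M/\widetilde{K}_n\}\subseteq J$, so there $d\mu=w(t)\,dt$; substituting $t=x+s/\widetilde{K}_n$ and using $\widetilde{K}_n=w(x)K_n$ turns $I_n/\widetilde{K}_n$ into $\frac{1}{w(x)}\int_{-M}^{M}\frac{K_n(y_n,x+s/\widetilde{K}_n)}{K_n}\cdot\frac{w(x+s/\widetilde{K}_n)}{s-\beta}\,ds$. Writing the integrand as $\big(\frac{K_n(y_n,x+s/\widetilde{K}_n)}{K_n}-\mathbb{S}(\alpha,s)\big)\frac{w(x+s/\widetilde{K}_n)}{s-\beta}+\mathbb{S}(\alpha,s)\frac{w(x+s/\widetilde{K}_n)-w(x)}{s-\beta}$, the first summand is controlled by the \emph{uniform} universal limit $(iii)$ on the compact set $[-M,M]$ (so that $\sup_{s\in[-M,M]}|K_n(y_n,x+s/\widetilde{K}_n)/K_n-\mathbb{S}(\alpha,s)|\to0$), by $|s-\beta|\ge|\im\beta|>0$, and by the boundedness of $\int_{-M}^{M}w(x+s/\widetilde{K}_n)\,ds=\widetilde{K}_n\int_{x-M/\widetilde{K}_n}^{x+M/\widetilde{K}_n}w(t)\,dt\to2Mw(x)$; the second summand is controlled by $|\mathbb{S}(\alpha,s)|\le 1$ together with the Lebesgue point hypothesis $(ii)$, which with $r=M/\widetilde{K}_n\to0$ gives $\int_{-M}^{M}|w(x+s/\widetilde{K}_n)-w(x)|\,ds=\widetilde{K}_n\int_{x-r}^{x+r}|w(t)-w(x)|\,dt\to0$ (both this and the previous limit follow by rewriting the integral as $M$ times an average over $(x-r,x+r)$, to which $(ii)$ applies directly). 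Hence $I_n/\widetilde{K}_n\to\frac{1}{w(x)}\int_{-M}^{M}\mathbb{S}(\alpha,s)\frac{w(x)}{s-\beta}\,ds=\int_{-M}^{M}\frac{\mathbb{S}(\alpha,s)}{s-\beta}\,ds$, which completes the plan.
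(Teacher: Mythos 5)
Your proposal is correct and follows essentially the same route as the paper: the same split of the $\mu$-integral at radius $M/\widetilde{K}_n(x,x)$, the same change of variables plus uniform universality and the Lebesgue point condition for the local piece, and the same Cauchy--Schwarz/reproducing-property bound (with the $M\ge 2|\beta|$ trick) for the far piece. The only difference is cosmetic: for the tail you cut at the fixed scale $\eta$ of the interval $J$ and absorb the outside via $\mu(\R)=1$, whereas the paper cuts at the shrinking intermediate scale $M\,\widetilde{K}_n(x,x)^{-1/3}$; both yield the same leading term and a vanishing error, and your constant ($4$ in place of $8$) is in fact slightly better.
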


\begin{proof}[Proof of Theorem \ref{AsymptoticsKernel} assuming Lemma \ref{1}]
Fix $\alpha \in \R$ and $\beta \in \C \setminus \R$. As $M \rightarrow \infty$ the righthand side of inequality (\ref{0.2}) goes to zero. This is because \
$$
\frac{\mathbb{S}(\alpha,s)}{s-\beta}=\frac{\sin(\pi(s-\alpha))}{\pi(s-\alpha)(s-\beta)}
$$
is integrable so its tail goes to zero.
Since $M$ is arbitrary, we get  the limit relation in Theorem \ref{AsymptoticsKernel}.
\end{proof}
\begin{proof}[Proof of Lemma \ref{1}]
By shifting the measure, we can assume that $x=0$, which we henceforth do for ease of notation. Fix $\alpha \in \R$ and $\beta \in \C \setminus \R$, and fix $M \geq 2|\beta|$. Write
\begin{equation}
\begin{split}
&\left |  \frac{1}{\widetilde{K}_n(0,0)} \int \frac{ K_n \left(\frac{\alpha}{\widetilde{K}_n(0,0)},t\right)}{t-\frac{\beta}{\widetilde{K}_n(0,0)}}d\mu(t)-\int \frac{\mathbb{S}(\alpha,s)}{s-\beta}ds \right|  \\
& \leq \left |  \frac{1}{\widetilde{K}_n(0,0)} \int \frac{ K_n \left(\frac{\alpha}{\widetilde{K}_n(0,0)},t \right)}{t-\frac{\beta}{\widetilde{K}_n(0,0)}}d\mu(t)-\int_{-M}^M \frac{\mathbb{S}(\alpha,s)}{s-\beta}ds \right| + \left| \int_{\R \setminus [-M,M] } \frac{\mathbb{S}(\alpha,s)}{s-\beta}ds \right|.
\nonumber
\end{split}
\end{equation}
To prove the Lemma, it is enough to show that
\begin{equation} \label{0.3}
\begin{split}
&\limsup_{n\rightarrow \infty} \left |  \frac{1}{\widetilde{K}_n(0,0)} \int \frac{ K_n \left(\frac{\alpha}{\widetilde{K}_n(0,0)},t \right)}{t-\frac{\beta}{\widetilde{K}_n(0,0)}}d\mu(t)
-\int_{-M}^M \frac{\mathbb{S}(\alpha,s)}{s-\beta}ds \right|
\\
&\leq 8 \sqrt{\frac{ \|w\|_J}{w(0) M}}.
\end{split}
\end{equation}
Let $I_n=[-\frac{M}{\widetilde{K}_n(0,0)},\frac{M}{\widetilde{K}_n(0,0)}]$, and write
\begin{equation}
\begin{split}
& \left| \frac{1}{\widetilde{K}_n(0,0)} \int \frac{ K_n \left(\frac{\alpha}{\widetilde{K}_n(0,0)},t \right)}{t-\frac{\beta}{\widetilde{K}_n(0,0)}}d\mu(t)-\int_{-M}^M \frac{\mathbb{S}(\alpha,s)}{s-\beta}ds \right| \\
&\leq \left| \frac{1}{\widetilde{K}_n(0,0)} \int_{I_n} \frac{ K_n \left(\frac{\alpha}{\widetilde{K}_n(0,0)},t \right)}{t-\frac{\beta}{\widetilde{K}_n(0,0)}}d\mu(t) - \int_{-M}^M \frac{\mathbb{S}(\alpha,s)}{s-\beta}ds \right| \\
&+\left|\frac{1}{\widetilde{K}_n(0,0)} \int_{\R \setminus I_n} \frac{ K_n \left(\frac{a}{\widetilde{K}_n(0,0)},t \right)}{t-\frac{\beta}{\widetilde{K}_n(0,0)}}d\mu(t) \right|.
\end{split}
\nonumber
\end{equation}
Consider first the first term in the righthand side of the inequality above,
\begin{equation}
 \left| \frac{1}{\widetilde{K}_n(0,0)} \int_{I_n} \frac{ K_n \left(\frac{\alpha}{\widetilde{K}_n(0,0)},t \right)}{t-\frac{\beta}{\widetilde{K}_n(0,0)}}d\mu(t) - \int_{-M}^M \frac{\mathbb{S}(\alpha,s)}{s-\beta}ds \right|.
 \nonumber
\end{equation}
Since $0$ is not a pure point of $\mu$, it follows that $K_n(0,0) \rightarrow \infty$. Thus, for sufficiently large $n$, $I_n \subseteq J$. We can therefore write
\begin{equation}
\begin{split}
&\left| \frac{1}{\widetilde{K}_n(0,0)} \int_{I_n} \frac{ K_n \left(\frac{\alpha}{\widetilde{K}_n(0,0)},t \right)}{t-\frac{\beta}{\widetilde{K}_n(0,0)}}d\mu(t) - \int_{-M}^M \frac{\mathbb{S}(\alpha,s)}{s-\beta}ds \right|  \\
&= \left| \frac{1}{\widetilde{K}_n(0,0)} \int_{I_n} \frac{ K_n \left(\frac{a}{\widetilde{K}_n(0,0)},t \right)w(t)dt}{t-\frac{\beta}{\widetilde{K}_n(0,0)}} - \int_{-M}^M \frac{\mathbb{S}(\alpha,s)}{s-\beta}ds \right|,
\end{split}
\nonumber
\end{equation}
and by transformation of variables $t=\frac{s}{\widetilde{K}_n(0,0)}$ we get
\begin{equation}
\begin{split}
& \left| \frac{1}{\widetilde{K}_n(0,0)} \int_{I_n} \frac{ K_n \left(\frac{\alpha}{\widetilde{K}_n(0,0)},t \right)w(t)dt}{t-\frac{\beta}{\widetilde{K}_n(0,0)}} - \int_{-M}^M \frac{\mathbb{S}(\alpha,s)}{s-\beta}ds  \right| \\
&=\left| \int_{-M}^M \frac{1}{s-\beta} \left( \frac{K_n \left(\frac{\alpha}{\widetilde{K}_n(0,0)},\frac{s}{\widetilde{K}_n(0,0)} \right)w
\left( \frac{s}{\widetilde{K}_n(0,0)} \right)}{ K_n(0,0) w(0)}-\mathbb{S}(\alpha,s) \right) ds \right|.
\end{split}
\nonumber
\end{equation}
The righthand side of the equality  above goes to zero as $n \rightarrow \infty$. This follows since $K_n$ has a unifrom universal limit at $0$, since $w$ is essentially bounded on $J$ and since $0$ is a Lebesgue point of $w$ (note that $|s-\beta| \geq |\im(\beta)|>0$).

It remains to show that the inequality
\begin{equation} \label{final}
\limsup_{n \rightarrow \infty} \left|\frac{1}{\widetilde{K}_n(0,0)} \int_{\R \setminus I_n} \frac{ K_n \left(\frac{\alpha}{\widetilde{K}_n(0,0)},t \right)}{t-\frac{\beta}{\widetilde{K}_n(0,0)}}d\mu(t) \right|
\leq  8 \sqrt{\frac{ \|w\|_J}{w(0) M}}.
\end{equation}
holds.
First, by the Cauchy-Schwarz inequality we obtain
\begin{equation}
\begin{split}
&\left|\frac{1}{\widetilde{K}_n(0,0)} \int_{\R \setminus I_n} \frac{ K_n \left(\frac{\alpha}{\widetilde{K}_n(0,0)},t \right)}{t-\frac{\beta}{\widetilde{K}_n(0,0)}}d\mu(t) \right| \\
& \leq \left| \frac{1}{\widetilde{K}_n(0,0)}\right| \left( \int_{\R \setminus I_n} \frac{d \mu(t)}{\left(t-\frac{\beta}{\widetilde{K}_n(0,0)}\right)^2}   \right)^{1/2}\left( \int_{\R}K_n
 \left(\frac{\alpha}{\widetilde{K}_n(0,0)},t \right)^2d\mu(t)   \right)^{1/2}.
 \end{split}
\nonumber
\end{equation}
Second, the above inequality, and the reproducing property of the kernel imply
\begin{equation}
\begin{split}
&\left|\frac{1}{\widetilde{K}_n(0,0)} \int_{\R \setminus I_n} \frac{ K_n \left(\frac{\alpha}{\widetilde{K}_n(0,0)},t \right)}{t-\frac{\beta}{\widetilde{K}_n(0,0)}}d\mu(t) \right|\\
&\leq\left| \frac{K_n \left(\frac{\alpha}{\widetilde{K}_n(0,0)},\frac{\alpha}{\widetilde{K}_n(0,0)} \right)^{1/2}}{w(0) K_n(0,0)}\right| \left( \int_{\R \setminus I_n} \frac{d \mu(t)}{\left( t-\frac{\beta}{\widetilde{K}_n(0,0)}\right)^2}   \right)^{1/2}.
\end{split}
\nonumber
\end{equation}
Note that for sufficiently large $n$,
\begin{equation}
\left|\frac{K_n \left(\frac{\alpha}{\widetilde{K}_n(0,0)},\frac{\alpha}{\widetilde{K}_n(0,0)} \right)^{1/2}}{w(0) K_n(0,0)}\right|
\leq \frac{2}{w(0) K_n(0,0)^{1/2}},
\nonumber
\end{equation}
by assumption $(iii)$ in the statement of Theorem \ref{MainTheorem}. Thus we get, for sufficiently large $n$,
\begin{equation} \label{4}
\begin{split}
&\left|\frac{1}{\widetilde{K}_n(0,0)} \int_{\R \setminus I_n} \frac{ K_n \left(\frac{\alpha}{\widetilde{K}_n(0,0)},t \right)}{t-\frac{\beta}{\widetilde{K}_n(0,0)}}d\mu(t) \right|  \\
& \leq \frac{2}{w(0) K_n(0,0)^{1/2}} \left( \int_{\R \setminus I_n} \frac{d \mu(t)}{\left(t-\frac{\beta}{\widetilde{K}_n(0,0)}\right)^2}\right)^{1/2}.
\end{split}
\end{equation}
We are left with estimating
\begin{equation}
\int_{\R \setminus I_n} \frac{d \mu(t)}{\left(t-\frac{\beta}{\widetilde{K}_n(0,0)}\right)^2}.
\nonumber
\end{equation}
First note that, since $M \geq 2|\beta|$, we get for every $t$ with $|t| \geq \frac{M}{\widetilde{K}_n(0,0)}$,
$$
|t-\frac{\beta}{\widetilde{K}_n(0,0)}|\geq \frac{t}{2}.
$$
Therefore,
\begin{equation}
\int_{\R \setminus I_n} \frac{d \mu(t)}{\left(t-\frac{\beta}{\widetilde{K}_n(0,0)}\right)^2} \leq \int_{\R \setminus I_n} \frac{4 d \mu(t)}{t^2}.
\nonumber
\end{equation}
Now we split the integral in the righthand side of the inequality above. Let $H_n=[-\frac{M}{\widetilde{K}_n(0,0)^{1/3}},\frac{M}{\widetilde{K}_n(0,0)^{1/3}}]$, and again note that for sufficiently large $n$, $H_n \subseteq J$. Clearly, $I_n \subseteq H_n$, so we can write
$$
\R \setminus I_n= \left( \R \setminus H_n \right) \cup \left(H_n \setminus I_n \right).
$$
We split the integral accordingly,
\begin{equation}
\begin{split}
\left| \int_{\R \setminus I_n} \frac{4 d \mu(t)}{t^2} \right| & = \left| \int_{\R \setminus H_n} \frac{4 d \mu(t)}{t^2}+\int_{H_n \setminus I_n} \frac{4 d \mu(t)}{t^2}\right| \\
& =\left| \int_{\R \setminus H_n} \frac{4 d \mu(t)}{t^2}+\int_{H_n \setminus I_n} \frac{4 w(t)dt}{t^2}\right| \\
& \leq \left|  \int_{\R \setminus H_n} \frac{4 d \mu(t)}{t^2} \right| + 4\|w\|_J \cdot \left|  \int_{H_n\setminus I_n}\frac{dt}{t^2} \right|,
\end{split}
\nonumber
\end{equation}
where we have used the fact that $H_n \subseteq J$ to write $d\mu(t)=w(t)dt$ there.
The proof of Lemma \ref{1} is almost complete. For the first integral on the left, note that for $t \notin H_n$, $t^2 \geq \frac{M^2}{\widetilde{K}_n(0,0)^{2/3}}$. Therefore, we have (taking into account that $\mu$ is a probability measure)
\begin{equation}
 \left|  \int_{\R \setminus H_n} \frac{4 d \mu(t)}{t^2} \right| \leq 4 \frac{\widetilde{K}_n(0,0)^{2/3}}{M^2}\int d\mu(t)
= 4 \frac{\widetilde{K}_n(0,0)^{2/3}}{M^2}.
\nonumber
\end{equation}
For the second integral, integration of $\frac{1}{t^2}$ gives (recall the definition of $I_n$)
\begin{equation}
4\|w\|_J \cdot \left|  \int_{H_n\setminus I_n}\frac{dt}{t^2} \right| \leq 4 \|w\|_J \cdot \int_{\R \setminus I_n}\frac{dt}{t^2}=8 \|w\|_J \frac{\widetilde{K}_n(0,0)}{M}.
\nonumber
\end{equation}
Plugging these estimates into (\ref{4}), we see
\begin{equation}
\begin{split}
&\left|\frac{1}{\widetilde{K}_n(0,0)} \int_{\R \setminus I_n} \frac{ K_n \left(\frac{\alpha}{\widetilde{K}_n(0,0)},t\right)}{t-\frac{\beta}{\widetilde{K}_n(0,0)}}d\mu(t) \right|  \\
& \leq \frac{8}{w(0) K_n(0,0)^{1/2}} \left( \frac{\widetilde{K}_n(0,0)^{2/3}}{M^2}+\|w\|_J  \frac{\widetilde{K}_n(0,0)}{M}  \right)^{1/2} \\
&= \frac{8}{\sqrt{w(0)}} \left( \frac{1}{M^2 \widetilde{K}_n(0,0)^{1/3}}+\|w\|_J  \frac{1}{M}  \right)^{1/2}
\end{split}
\end{equation}
which immediately shows that
\begin{equation}
\limsup_{n \rightarrow \infty} \left|\frac{1}{\widetilde{K}_n(0,0)} \int_{\R \setminus I_n} \frac{ K_n \left(\frac{\alpha}{\widetilde{K}_n(0,0)},t \right)}{t-\frac{\beta}{\widetilde{K}_n(0,0)}}d\mu(t) \right|
\leq 8 \sqrt{\frac{ \|w\|_J}{w(0) M}}.
\nonumber
\end{equation}
\end{proof}

\end{document}